

\newif\ifieee
\ieeefalse


\documentclass[conference,letterpaper]{IEEEtran}
\addtolength{\topmargin}{9mm}


\usepackage[utf8]{inputenc}
\usepackage[T1]{fontenc}
\usepackage{url}
\usepackage{doi}
\usepackage{datetime}
\usepackage{hyperref}
\usepackage{cite}
\usepackage{amsfonts}
\usepackage{nicefrac}
\usepackage{tabularx}
\usepackage{array}
\usepackage[cmex10]{amsmath} 
\usepackage{amssymb}
\usepackage{algorithm}
\usepackage{algpseudocode}

\usepackage{tikz}
\usepackage{circuitikz}
\usetikzlibrary{arrows,math,shapes,chains,calc,fit,positioning,%
	decorations.pathmorphing,automata}
\usepgflibrary{shapes.arrows}
\usepgflibrary{arrows.meta}

\newcolumntype{x}[1]{>{\centering\let\newline\\%
	\arraybackslash\hspace{0pt}}p{#1}}
\newcolumntype{y}[1]{>{\raggedright\let\newline\\%
	\arraybackslash\hspace{0pt}}p{#1}}
\newcolumntype{z}[1]{>{\raggedleft\let\newline\\%
	\arraybackslash\hspace{0pt}}p{#1}}

\newcommand\Tstrut{\rule{0pt}{2.5ex}}			

\DeclareMathOperator\erf{erf}
\DeclareMathOperator\erfc{erfc}

\newtheorem{theorem}{Theorem}
\newtheorem{definition}{Definition}
\newtheorem{proof}{Proof}


\interdisplaylinepenalty=2500 

\hyphenation{op-tical net-works semi-conduc-tor}

\begin{document}
\title{On Entropy and Bit Patterns of Ring Oscillator Jitter}

\author{\IEEEauthorblockN{Markku-Juhani O. Saarinen}%
	\IEEEauthorblockA{PQShield Ltd., Oxford, UK\\
					mjos@pqshield.com}
}

\maketitle

\begin{abstract}
	Thermal jitter (phase noise) from a free-running ring oscillator is 
	a common, easily implementable physical randomness source in True
	Random Number Generators (TRNGs). We show how to evaluate entropy,
	autocorrelation, and bit pattern distributions of ring oscillator
	noise sources, even with low jitter levels or some bias. Entropy
	justification is required in NIST 800-90B and AIS-31 testing and for
	applications such as the RISC-V entropy source extension. Our
	numerical evaluation algorithms outperform Monte Carlo simulations
	in speed and accuracy. We also propose a new lower bound estimation
	formula for the entropy of ring oscillator sources which applies more
	generally than previous ones.
\end{abstract}

\section{Introduction: Ring Oscillator Jitter}
\label{sec:intro}

	Free-running (ring) oscillators are widely used as physical noise
	sources in True Random Number Generators (TRNGs). In many ways, these
	designs are direct descendants of the oscillator-based
	``electronic roulette wheel'' used to generate the RAND tables of
	random digits in the late 1940s \cite{Br49}.

	A typical design (Fig. \ref{fig:ringosc}) has two oscillators; an 
	unsynchronized ring oscillator and a reference oscillator that is used
	to sample bits from the free-running oscillator. Spontaneous and
	naturally occurring phase shifts between the oscillators will cause
	unpredictability of output bits. These random oscillator period
	variations are known as oscillator jitter \cite{McRi09}.

	A pioneering  Ring Oscillator RNG chip was described and patented
	in 1984 by Bell Labs researchers \cite{FaMoCo84,CoFaMo84}.
	This type of noise source can be realized with
	``standard cells'' in HDL and requires no special manufacturing
	processes, making it a popular choice. More modern versions are used
	as noise sources for cryptographic key generation in common microchips
	from AMD \cite{AM17}  and ARM \cite{AR20}.

	Physical entropy sources are regulated in cryptographic
	security standards such as NIST's SP 800-90B \cite{TuBaKe:18}
	(for FIPS 140-3) and BSI's AIS 31 \cite{KiSc11} (for Common Criteria).
	These mandate health monitoring (built-in statistical tests) and
	appropriate post-processing. Cryptographic post-processing methods
	such as the SHA2 hash \cite{NI15A} completely mask statistical defects
	while still allowing guessing attacks. Noise source entropy
	evaluation is therefore crucial for determining the sampling rate and 
	``compression ratio'' of the conditioner.

	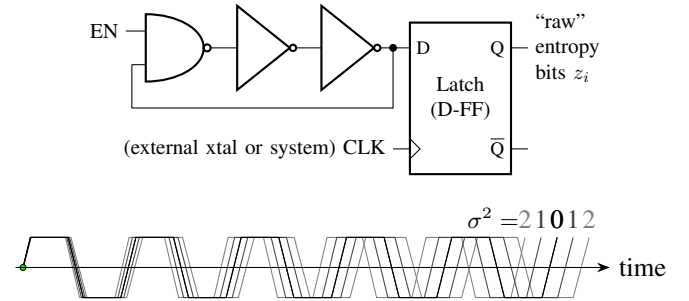
\begin{figure}[b]
	\centering

\begin{circuitikz}[scale=0.8,transform shape]

	\draw	node[nand port,anchor=in 2](nand) at (0,0) {};
	\draw	node[not port,anchor=in](not1) at (nand.out) {};
	\draw	node[not port,anchor=in](not2) at (not1.out) {};

	\draw	(not2.out) node[circ] {} -- ++(down:1) -| (nand.in 2);

	\draw	node[flipflop D,anchor=pin 1,align=center] (dff) at (not2.out) 
		{{Latch}\\{(D-FF)}};
	\node[anchor=east,left] at (dff.pin 3) {(external xtal or system) {CLK}};

	\draw	(nand.in 1) -- ++(left:.1) node[anchor=east,left] {EN};

	\node[anchor=east,right,align=left](jit) at (dff.pin 6) 
		{{``raw''}\\{entropy}\\{bits $z_i$}};


\end{circuitikz}
	\vspace{2ex}

\begin{tikzpicture}[>=latex, scale=1.0, every node/.style = {scale=1.0}]


\coordinate (wav0) at (-3.9,-1.5);

\foreach \y in {2, 1, 0} {
	
	\foreach \s in {-1.0,+1.0} {

		\tikzmath{ \x=(1.0+\s*(0.03*\y)); \c=100-25*\y; }

		\draw[black!\c] (wav0)
		\foreach \i in { 1, 2, 3, 4, 5} {
			-- ++(\x*0.1,+0.4)
			-- ++(\x*0.5,+0.0)
			-- ++(\x*0.2,-0.8)
			-- ++(\x*0.5,+0.0)
			-- ++(\x*0.1,+0.4)
		}
		-- ++(\x*0.1,+0.4)
		node[above] {\y};
	}
}

\node[above] at  ($(wav0) + (6.25,0.4)$) {\small $\sigma^2=$};

\node [circle,draw=black,fill=green,inner sep=0.0,minimum size=0.5ex] 
	at (wav0) {};

\draw[-Stealth] ($(wav0) + (-0.1,0.0)$) -- ++(7.9,0.0) node[right] {time};

\end{tikzpicture}
	\caption{A ring oscillator consists of an odd  (here $N=3$) number of
		inverters connected into a free-running loop. The output is sampled
		using an independent reference clock, such as a crystal
		oscillator. Transition times are affected by jitter
		(largely from Johnson-Nyquist thermal noise), whose accumulation
		causes samples to become increasingly unpredictable.}
	\label{fig:ringosc}
	\end{figure}

\subsection{Physical Models and Their Limits}
\label{sec:physlim}

	An important contributor to the randomness of jitter in a
	ring-oscillator inverter loop (Fig. \ref{fig:ringosc}) is
	Johnson-Nyquist thermal noise \cite{Jo28,Ny28A}, which occurs
	spontaneously in any conductor (regardless of quality) as a result of
	thermal agitation of free electrons. Jitter is a macroscopic
	manifestation of this quantum-level \cite{CaWe51} Brownian
	effect.

	Timing jitter is a relatively well-understood phenomenon for many
	reasons. It is an important limiting factor to the synchronous
	operating frequency of any digital circuit.
	
	An example of a detailed physical model for ring oscillator phase
	noise and jitter is provided by Hajimiri et al.
	\cite{HaLe98,HaLe99,HaLiLe99}, which we recap here.
	The randomness of the timing jitter has a strongly Gaussian character.
	The jitter accumulates in the phase difference against the reference
	clock, with variance $\sigma_t^2$ growing almost linearly from one
	cycle to the next.

	Under common conditions, the transition length standard deviation
	(uncertainty) $\sigma_t$ after time $t$ can be estimated for
	CMOS ring oscillators as (after \cite[Eqns. 2.6,5.18]{HaLe99}):
	\begin{equation}
		\sigma_t^2 = \kappa^2 t \approx
		\frac{8}{3 \eta} \cdot
		 \frac{kT}{P} \cdot
		 \frac{V_{DD}}{V_\text{char}} \cdot t
		 \label{eqn:hajimiri}
	\end{equation}
	In this derivation of physical jitter $\kappa^2$ we note especially the
	Boltzmann constant $k$ and absolute temperature $T$; other variables
	include power dissipation $P$, supply voltage $V_{DD}$, device
	characteristic voltage $V_\text{char}$, and a proportionality constant
	$\eta \approx 1$. The number of stages ($N$) and frequency $f$ affect
	power $P$ via common dynamic (switching) power equations.

	As noted in \cite[Sect. 5.2.1]{HaLe99}, such derived models only
	express {\it ``inevitable noise sources''} -- not {\it ``extra
	disturbance, such as substrate and supply noise, or noise contributed
	by extra circuitry or asymmetry in the waveform''}	-- which will
	increase jitter. Many of these factors are difficult to model
	individually or are beyond digital designers' control.
	In practice $\kappa^2$ is measured experimentally, and the existence
	of jitter (and hence, fresh thermal noise entropy) is continuously
	monitored by auxiliary circuits that are a part of the TRNG.

\subsection{From Statistical Random Tests to Entropy Evaluation}
\label{sec:stattest}

	A 1948 report by RAND \cite{Br48} describes the statistical tests
	performed on the output of the ``million digits'' oscillator device
	\cite{RA55}. The tests were based on work by Kendall and Smith 
	\cite{KeBa38,KeBa39} with their late 1930s electromechanical random
	number device: Frequency test, Serial test, Poker test, and Gap test.
	It is remarkable that versions of these tests remained in use until
	the 2000s in the FIPS 140-2 standard \cite{NI01A}.
	
	While such ``black box'' statistical tests suites -- including Marsaglia's 
	DIEHARD and its successors \cite{Ma95,BrEdBa03} and NIST SP 800-22 
	\cite{RuSoNe:10} --- may be useful when evaluating pseudorandom generators
	for Monte Carlo simulations, they are poorly suited for security 
	applications. It is illustrative that a test existed in NIST
	SP 800-22 even in 2010 to see if an LFSR is {\it ``long enough''} to be 
	{\it ``considered random''}  \cite[Sect. 2.10]{RuSoNe:10}. Elementary
	cryptanalysis with finite field linear algebra shows that the internal
	state of an LFSR can be derived from a small amount of output,
	allowing both future and past outputs to be reproduced with little
	effort -- a devastating scenario if that output is to be used for
	cryptographic keying. 
	
	By 2001 at least the German AIS 20/31 \cite{KiSc11,ScKi02} Common
	Criteria IT Security evaluations had diverged from the purely black-box
	statistical approach and instead concentrated on quantifying entropy
	produced by a noise source, evaluation of its post-processing methods,
	and also considered implementation security, cryptanalytic attacks,
	and vulnerabilities. Current NIST security evaluation methodology of
	physical noise sources \cite{TuBaKe:18} also acknowledges that general
	statistical properties of raw noise are less important than
	evaluation of its entropy content, but at the time of writing, do not
	require stochastic models or detailed analysis of physical sources.
	
	For purposes of security engineering, pseudorandomness in the output 
	of the physical source is an unambiguously negative feature as it makes
	the assessment of true entropy more difficult. On the other hand,
	Redundancy from a well-behaved stochastic model is easily manageable via 
	cryptographic post-processing. Once seeded, standard (Cryptographic)
	Deterministic Random Bit Generators (DRBGs \cite{BaKe15}) guarantee
	indistinguishability from random, in addition to providing prediction
	and backtracking resistance.

\subsection{Ring Oscillators as Wiener Processes}
\label{sec:rowiener}

	Pioneering work on modern Physical RNG Entropy Estimation was presented
	by Killmann and Schindler\cite{KiSc08}, whose stochastic model uses
	independent and identically distributed transition times (half-periods)
	to model jitter. Baudet et al. \cite{BaLuMi:11} take a frequency domain
	(phase noise) approach.  Our model broadly follows these and also the
	one by Ma et al. \cite{MaLiCh:14}. 
	
	Baudet et al. propose a Shannon entropy lower bound 
	\cite[Eqn. 14]{BaLuMi:11}, which has been used in engineering
	(e.g. \cite{PeMuBo:16}):	
	\begin{equation}
		\label{eqn:baudeth}
		H_1 \geq 1 - \frac{4}{\pi^2 \ln 2} 
			e^{-4 \pi^2 Q}
		+ O\big( e^{-6\pi^2 Q} \big).
	\end{equation}
	Here $Q=\sigma^2 \Delta t$ (``quality factor'') corresponds to 
	$\kappa^2$ in the physical model (Eqn. \ref{eqn:hajimiri}).	
	We observe that the bound of Eqn. \ref{eqn:baudeth} is never lower than
	0.415 even when $Q$ approaches zero -- this estimate is safe to use only
	under some additional assumptions. 


\subsection{Our Goals: FIPS 140-3 and More Generic ROs}
\label{sec:introgoals}
	
	Prior works generally state that the frequency of the free-running
	oscillator is much higher than sampling frequency and that they do
	not have a harmonic relationship. The source is also often
	taken to be unbiased and assumed to have a relatively high amount of
	entropy per sample. In this work, we show how to compute entropy, 
	autocorrelation coefficients, and bit pattern probabilities also 
	for less ideal parameters. Our goal is to have guarantees for
	entropy and min-entropy in TRNG designs. This is required in current
	cryptography standards AIS 31 \cite{KiSc11} and
	FIPS 140-3 \cite{NICC21} / SP 800-90B \cite{TuBaKe:18} and for use
	in applications such as RISC-V Microprocessors \cite{SaNeMa21,_Ma21}.

\section{A Stochastic Model and its Distributions}
\label{sec:tmodel}
	
	We consider the jitter accumulation $\sigma^2 \sim Q$ at sample time 
	rather than the variance of (half) periods \cite[Sect. 2.2]{MaLiCh:14}.
	We also ease analysis by using the sampling period as a unit of time -- 
	sample $z_i$ is at ``time'' $i$, and variance is defined accordingly.
	Our time-phase accumulation matches
	with the physical model ($\kappa^2$ of Eqn. \ref{eqn:hajimiri}) and
	also accounts for spontaneous, purely Brownian transitions and ripple
	when the relative frequency $F$ of oscillators is very small or harmonic.
	
	For sampled digital oscillator sources, we may ignore the signal
	amplitude and consider a pulse wave with period $T$
	and relative pulse width (``duty cycle'') $D$. We assume a constant
	sampling rate and use the sample bits as a measure of time.

	We normalize the sinusoidal phase $\omega$ as
	$x = \frac{\omega - \delta}{2\pi}$ to range $0 \leq x < 1$, where 
	$\delta$ is the rising edge location. The average frequency
	$F \approx 1/T \bmod 1$ is a per-bit increment to the phase and
	$\sigma^2$ is its per-bit accumulated variance (Eqn. \ref{eqn:hajimiri}).

	\begin{definition}[Sampling Process]
	The behavior of a $(F,D,\sigma^2)$ noise source and its bit sampler is
	modeled as:
	\begin{align}
	\label{eqn:jitstep}
	x_{i}	& = \big( x_{i-1} + \mathcal{N}(F,\sigma^2) \big) ~ \bmod 1 \\
	\label{eqn:jitcases}
	z_{i}	& = \begin{cases}
					1 & \text{if } x_i < D,\\
					0 & \text{if } x_i \geq D. \\
				\end{cases}
	\end{align}
	Here $z_i \in \{0,1\}$ is an output bit, and $x_i \in [0,1)$ is the
	normalized phase at sampling time. $F$ is the frequency in relation to 
	the sampling frequency, and $\sigma^2$ represents jitter.
	\end{definition}
	
	Due to normalization ($x \bmod 1 \equiv x - \lfloor x \rfloor$), 
	and negative $-F$ symmetry, $F$ can be reduced to range 
	$[0,\nicefrac{1}{2}]$. One may view this as a ``harmonic''
	reduction but there is no restriction for the sampler to run 
	faster than the source oscillator.
	
	The sampling process can be easily implemented to generate simulated bits 
	$z_1, z_2, z_3, ..$ for given parameters $(F,D,\sigma^2)$.
	This Wiener process is clearly only an idealized stochastic
	model, and its applicability for modeling specific physical
	random number generators must be individually evaluated.

	\begin{figure}[tb]
	\centering
	\includegraphics[width=.45\textwidth]{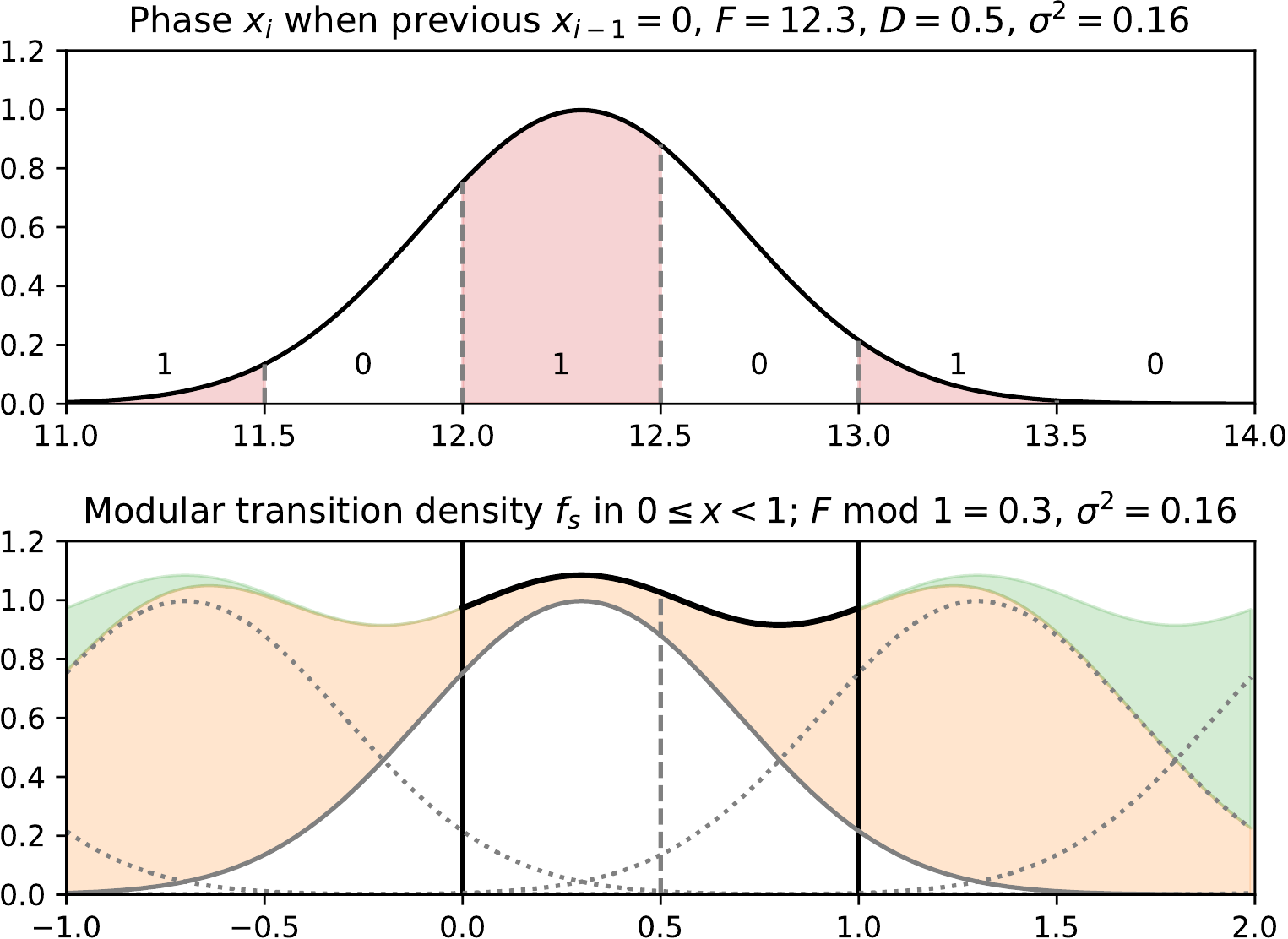}
	\caption{Gaussian phase transition and equivalent modular density.}
	\label{fig:stack}
	\end{figure}

\subsection{Distance to Uniform}
\label{sec:flatnesss}

	The Gaussian probability density function in Eqn. \ref{eqn:jitstep}
	becomes modularly wrapped (Fig. \ref{fig:stack}.) The classical
	assumption of ring oscillators is that if the accumulated variance
	$\sigma^2$ is large enough in relation to sampling rate, the modular
	step density function will become essentially ``flat'' in $[0,1)$;
	furthermore, if $(x_i-x_{i-1}) \bmod 1$ is uniformly random, then
	the bit sequence $z_i$ is correlation-free.	Some sources simply
	state ad hoc criteria for decorrelation
	(e.g. that $\sigma^2>1$).

	We will calculate the step function's statistical distance to the
	uniform distribution. 
	The density of the unbounded step function (Eqn. \ref{eqn:jitstep})
	can be equivalentl y defined
	over domain $0 \leq x < 1$ or as a 1-periodic function in 
	$\mathbb{R}/\mathbb{Z}$ 
	(See Fig. \ref{fig:stack}):
	\begin{equation}
		\label{eqn:psigma}
		f_s(x) = \frac{1}{\sqrt{2\pi\sigma^2}} \sum_{i \in \mathbb{Z}}
			e^{-\frac{(x - F + i)^2}{2\sigma^2}}.
	\end{equation}
	We have $f_s(a) = f_s(a+1)$ and
	$\int_a^{a+1} f_s(x)\,\mathrm{d}x = 1$ for all $a \in \mathbb{R}$.
	By choosing a tailcut value $\tau$ one can limit the sum to
	$\lfloor-\tau\sigma\rfloor \leq i \leq \lceil\tau\sigma\rceil$.
	This allows us to determine
	max at $f_s(F)$ and min at $f_s(F+\nicefrac{1}{2})$ for given $\sigma$.
	These are bounds for its statistical (total variation) distance to the
	uniform distribution (See Table \ref{tab:psigma}.) We see that this
	idealized ``1-dimensional lattice Gaussian'' would
	be cryptographically uniform at $\sigma^2 > 9$.


	\begin{table}[htb]
	\centering
	\caption{Extrema of the probability density function $f_s(x)$ for
		some $\sigma$.	}
	\label{tab:psigma}
	\begin{tabular}{|x{4ex}| x{9ex} x{9ex}|c|x{4ex}|x{15ex}|}
		\cline{1-3} \cline{5-6} \Tstrut
		$\sigma$ & $f_s$ min & $f_s$ max & ~~
			& $\sigma$ & $f_s$ range \\
		\cline{1-3} \cline{5-6} \Tstrut
		0.10 & $0.000030$ & $3.989423$	&&	1.00 & $1 \pm 2^{-27.47766}$ \\
		0.20 & $0.175283$ & $1.994726$	&&	1.50 & $1 \pm 2^{-63.07473}$ \\
		0.30 & $0.663191$ & $1.340089$	&&	2.00 & $1 \pm 2^{-112.9106}$ \\
		0.50 & $0.985616$ & $1.014384$	&&	2.50 & $1 \pm 2^{-176.9854}$ \\
		0.75 & $0.999970$ & $1.000030$	&&	3.00 & $1 \pm 2^{-255.2989}$ \\
		\cline{1-3} \cline{5-6}
	\end{tabular}
	\end{table}

\subsection{Autocorrelation and Sampling Intervals}
\label{sec:bitac}
	We define a scaled, binary delay-$k$ autocorrelation measure 
	$-1 \leq C_k \leq +1$:
	\begin{equation}
		C_k = 2 \Pr(z_i = z_{i+k}) - 1.
		\label{eqn:ekdef}
	\end{equation}
	We may estimate $C_k$, $k \geq 1$ for a finite $m$-bit sequence as
	\begin{equation}
		C'_k = \frac{1}{m-k} \sum_{i=1}^{m-k} (2z_i-1)(2z_{i+k}-1).
		\label{eqn:ekestm}
	\end{equation}
	For convenience, we set $C'_0 = \frac{1}{m}\sum_{i=1}^{m} (2 z_i-1)$ to
	represent simple bias in the same vector; $C'_0$ approximates $2D-1$.

	\begin{theorem}
	\label{thm:jitkadd}
	With fixed $D$ and
	$\sigma^2 > 0$ or $F \not\in \mathbb{Q}$ we have
	\begin{equation}
		C_k(F,D,\sigma^2) = C_1(kF \bmod 1, D, k\sigma^2)
			\text{~for~} k \geq 1.
	\end{equation}
	\end{theorem}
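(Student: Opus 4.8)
The plan is to reduce the delay-$k$ correlation to a delay-$1$ correlation by exploiting that the accumulated phase increment over $k$ steps is itself Gaussian. First I would iterate the recurrence of Eqn.~\ref{eqn:jitstep}: each step adds an independent $\mathcal{N}(F,\sigma^2)$, and because the modular reduction commutes with addition,
\begin{equation}
x_{i+k} = \Big( x_i + \textstyle\sum_{j=1}^{k} \mathcal{N}(F,\sigma^2) \Big) \bmod 1 .
\end{equation}
A sum of $k$ independent $\mathcal{N}(F,\sigma^2)$ variables is distributed as $\mathcal{N}(kF,k\sigma^2)$, so the one-shot transition from $x_i$ to $x_{i+k}$ has exactly the law of a single step of a process with frequency $kF$ and variance $k\sigma^2$.

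Next I would remove the integer part of the mean. The wrapped density of Eqn.~\ref{eqn:psigma} is invariant under integer shifts of its mean, since such a shift is absorbed by reindexing the lattice sum. Hence $\mathcal{N}(kF,k\sigma^2)\bmod 1$ and $\mathcal{N}(kF\bmod 1,\,k\sigma^2)\bmod 1$ have the same distribution, which identifies the delay-$k$ transition kernel of the $(F,D,\sigma^2)$ process with the delay-$1$ transition kernel of the $(kF\bmod 1,D,k\sigma^2)$ process.

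To finish, I would note that $C_k$ depends on the pair $(x_i,x_{i+k})$ only through $\Pr(z_i=z_{i+k})$, and that this joint law is the product of the marginal of $x_i$ with the transition kernel analysed above. The two marginals agree because, under the hypothesis $\sigma^2>0$ or $F\notin\mathbb{Q}$, each chain has the same unique stationary phase law, namely the uniform density on $[0,1)$: when $\sigma^2>0$ the wrapped-Gaussian convolution attenuates every nonzero Fourier mode on the circle, so the iterates flatten to uniform, and when $\sigma^2=0$ with $F$ irrational the same flattening follows from Weyl equidistribution. With matching uniform marginals and matching kernels, the joint laws of $(x_i,x_{i+k})$ in the original process and $(x'_i,x'_{i+1})$ in the $(kF\bmod 1,D,k\sigma^2)$ process coincide, so $\Pr(z_i=z_{i+k})=\Pr(z'_i=z'_{i+1})$ and the stated identity follows.

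The step I expect to be the main obstacle is the marginal/stationarity argument: showing that the hypothesis $\sigma^2>0$ or $F\notin\mathbb{Q}$ is precisely what forces a uniform stationary marginal, and that it rules out the degenerate noiseless rational case in which the phase is confined to a finite set and the clean additive law for $C_k$ can fail. Once uniformity is in hand, the Gaussian-convolution and mod-$1$ invariance steps are routine.
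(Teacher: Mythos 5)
Your proposal is correct and follows essentially the same route as the paper's own proof: additivity of the independent Gaussian increments gives the $\mathcal{N}(kF, k\sigma^2) \bmod 1$ delay-$k$ transition law, and the hypothesis $\sigma^2>0$ or $F\notin\mathbb{Q}$ is used exactly as the paper uses it, to justify taking the marginal of $x_i$ as uniform on $[0,1)$. The only difference is one of detail, not of substance: the paper asserts the uniform-marginal step in a single sentence, whereas you spell out the mod-$1$ invariance under integer mean shifts and the Fourier-attenuation/Weyl-equidistribution justification for stationarity.
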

	\begin{proof}
		The variance of independent random variables is additive by
		induction in $k$, as is the mean.
		The difference $x_k-x_0$ will then have the distribution
		$\mathcal{N}(kF,k\sigma^2) \bmod 1$.
		Only with either noisy or non-rational (non-harmonic) $F$
		we may take $x_0$ in Equation \ref{eqn:jitstep} to be uniformly
		distributed in $[0,1)$.
	\end{proof}

\subsection{Computing $C_k$ to High Precision Without Simulation}
\label{sec:compek}

	Let $p_{00}$, $p_{01}$, $p_{10}$, $p_{11}$ be frequencies
	of adjacent bit pairs $p_{({z_{i},z_{i+1}})}$ present in bit sequence
	$z_i$ (Eqn. \ref{eqn:jitcases}) in the model.

	We'll pick one, $p_{11} = \Pr(z_{i}=1 \text{ and } z_{i+1}=1)$. The
	condition $z_i = 1$ limits the density of $x_i$ to ``boxcar'' $g_1$:
	\begin{equation}
	\label{eqn:g1def}
		g_1(x) = \begin{cases}
				1 & \text{if } x \in [0, D) \\
				0 & \text{if } x \notin [0, D). \\
			\end{cases}
	\end{equation}
	We also define $g_0(x) = 1$ if $x \in [D, 1)$ and zero elsewhere.

	The addition of random variables corresponds to convolution of their
	density functions; convolution $f_1 = g_1 * f_s$ with the step function
	$f_s$ (Eqn. \ref{eqn:psigma}) yields the probability density of 
	$x_{i+1}$ conditioned on $x_i = 1$.
	The probability mass of the second bit $z_{i+1} = 1$ is in range
	$x_{i+1} \in [0,D)$ and we have
	\begin{equation}
		p_{11} = \int_0^D f_1(x) \,\mathrm{d}x.
		\label{eqn:p11int}
	\end{equation}

	Convolution $f_1 = g_1 * f_s$ density can be expressed as
	\begin{equation}
		\label{eqn:r01pdf}
		f_1(x) = \frac{1}{2} \sum_{i \in \mathbb{Z}}
			\big[ \erf(a_i) - \erf(b_i)	 \big]
	\end{equation}
	Where	$a_i = (x + i - F) / \sqrt{2\sigma^2}$ and
			$b_i = (x + i - F - D) / \sqrt{2\sigma^2}$.
	An indefinite integral $S_1=\int f_1(x)\,\mathrm{d}x$ with the same
	$a_i$,$b_i$ is
	\begin{equation}
		\label{eqn:integ}
		S_1(x)=\frac{\sqrt{2\sigma^2}}{2} \sum_{i \in \mathbb{Z}}
		\left[ a_i \erf(a_i) - b_i \erf(b_i)
			+ \frac{e^{-a_i^2} - e^{-b_i^2}}{\sqrt{\pi}} \right].
	\end{equation}
	Again, one can choose a tailcut bound $\tau$ for desired
	precision $\epsilon \approx \erfc(\tau/\sqrt{2})$ (via Gaussian CDF)
	and compute the sums just over the integer range
	$\lfloor-\tau\sigma\rfloor \leq i \leq \lceil\tau\sigma\rceil$.
	A typical choice for IEEE floating point is $\tau=10$
	(``ten sigma'').

	Choosing $p_{11} = \int_0^D f_1(x)\,\mathrm{d}x =
	S_1(D)-S_1(0)$ has some computational advantages.
	From $p_{11}$ we can derive other parameters
	$p_{01} = p_{10} = D - p_{11}$, $p_{00} = 1 - 2D + p_{11}$, and
	$C_1 = 4(p_{11}-D)+1$.
	To compute arbitrary $C_k$, substitute parameters $F' = kF \bmod 1$
	and $\sigma'^2 = k\sigma^2$ (Thm. \ref{thm:jitkadd}). We then have
	$C_k$ as $C'_1 = 4 [ S_1(D) - S_1(0) - D ] + 1$.

	Figure \ref{fig:chop4} shows the density functions $g$ for the four
	bit pair frequencies when $F=0.1$, $D=0.625$, $\sigma^2=0.04$
	($\sigma=0.2$).
	The dotted line on upper boxes corresponds to shape of $f_0$ and
	lower row to $f_1$; these have been chopped (shaded area) to
	$g_{00}, g_{01}, g_{10}, g_{11}$.
	Note that even though $g_{10}$ has a different
	shape from $g_{01}$, they have equivalent area and hence
	frequency $p_{01} = p_{10} = \frac{1-C_1}{4}$. This is natural since the
	frequency of rising edges must match the frequency of falling edges.

	\begin{figure}[tb]
	\centering
	\includegraphics[width=.45\textwidth]{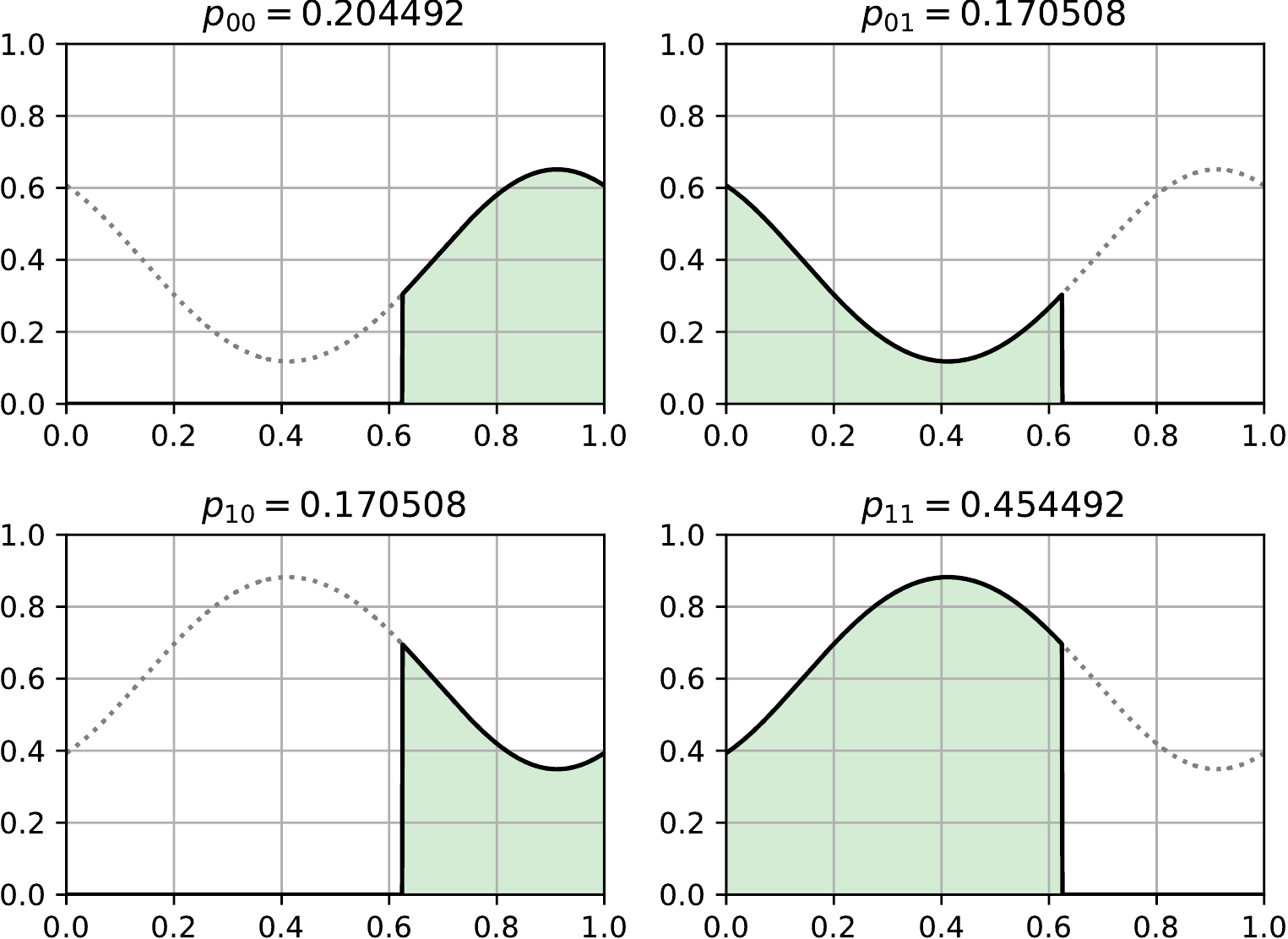}
	\caption{Bit pair probabilities for $F=0.1$, $D=0.625$, $\sigma^2=0.04$.}
	\label{fig:chop4}
	\end{figure}

\subsection{Use of $C_k$ in Modeling of Physical Sources}
\label{sec:physmod}

	The output from bit generation simulations agrees with
	these explicit autocorrelation values as expected.  
	Analytic $C_k$ is of course much faster to compute.
	
	Since autocorrelation estimates $C'_k$ (Eqn. \ref{eqn:ekestm})
	may also be easily derived from the output of physical ring oscillators,
	we can find a good approximate $(F,D,\sigma^2)$ model
	for a physical source by matching their autocorrelation properties. 
	We use least-squares minimization of few initial
	entries of autocorrelation vectors for this type of modeling.
	
	We can also experimentally derive parametrized models where frequency
	$F$ and jitter $\sigma^2$ are functions of environmental aspects such 
	as temperature or aging of circuitry; this, in turn, allows us to
	extrapolate and assign safe bounds for statistical health tests
	parameters and entropy output (yield of conditioner) over the lifetime
	of the device.

\subsection{Bit Pattern Probabilities via FFT Convolutions}
\label{sec:bitpat}

	To compute probabilities of bit triplets and beyond, we may ``chop'' a
	density function to zero the part which we know to be conditioned
	out; $g_{10}(x) = (f_1 \cdot g_0)(x)$, is $f_1$ chopped to zero
	outside $[D,1)$ range so we have 
	$\int_{-\infty}^\infty g_{10}(x) \,\mathrm{d}x = p_{10}$.

	Let $z$ be a
	sequence of bits for which conditional distribution $f_z$ is known;
	``chopping'' with  $g_0$ or $g_1$ and convoluting with step function
	$f_s$ we obtain distributions of one additional bit:
	$f_{z,0} = (f_z \cdot g_0) * f_s$ and
	$f_{z,1} = (f_z \cdot g_1) * f_s$.
	This chop-and-convolute process can be continued to determine the
	probability and phase distribution of an arbitrary bit pattern.
	
	Direct probability distribution integration formulas such as Eqn. 
	\ref{eqn:integ} become cumbersome for more generic bit
	patterns. We instead perform numeric computations on probability density
	functions $f_z$ represented as real-valued polynomial coefficients.
	This approach is attractive since the Fast Fourier Transform offers an
	especially efficient way to compute \emph{cyclic} convolutions of
	polynomials, as is required by our 1-periodic step function $f_s$ 
	(Eqn. \ref{eqn:psigma}). Unlike unbounded Gaussians our random
	variables $x_i \in [0,1)$ have a strictly limited range.

	\alglanguage{pseudocode}
	\begin{algorithm}
	\caption{Evaluate bit pattern probability $p_z$}
	\label{alg:pzfft}
	\begin{algorithmic}[1]
	\Function {pzfft}{$F$, $D$, $\sigma^2$, $z_1 z_2 \cdots z_n$}
	\For{$j\gets 0, 1, \cdots, m-1$}	\Comment{Init: Approximation.}
		\State $s_j \gets \frac{1}{m} f_s(\frac{j}{m})$
				\Comment{Eqn. \ref{eqn:psigma} for $F$ and $\sigma^2$.}
		\State $g_{1,j} \gets \max(\min(mD-j, 1), 0)$
				\Comment{Eqn. \ref{eqn:g1def} for $D$.}
		\State $g_{0,j} \gets 1 - g_{1,j}$	
				\Comment{Select zero -- inverse.}
		\State $v_j \gets \frac{1}{m}$
				\Comment{Start with uniform distribution.}
	\EndFor
	
	\For{$i\gets 1, 2, \cdots, n$}	\Comment{For each bit.}
		\For{$j\gets 0, 1, \cdots, m-1$}	\Comment{Chop half.}
			\State $t_j \gets v_j g_{z_i,j}$
				\Comment{Note bit select index $z_i$.}
		\EndFor
		\State{$v \gets t * s \bmod (x^m-1)$}
			\Comment{Convolution (FFT).}
	\EndFor
	
	\State \textbf{return} $p_z = \sum_{i=0}^{m-1} v_i$
		\Comment{Probability mass.}
	\EndFunction
	\end{algorithmic}
	\end{algorithm}

	These probability density functions $f(x)$ correspond to real-valued 
	$m$-degree polynomials $v=\sum_{i=0}^m x^i v_i$ in Algorithm 
	\ref{alg:pzfft}.
	The unit interval domain $x \in [0,1)$ is mapped to coefficients via
	$v_i \approx \int_{i/m}^{(i+1)/m} f(x) \,\mathrm{d}x$.
	For the step function of we approximate this 
	as $s_i = \frac{1}{m}f_s(i/m)$ and for chop functions so
	that $\sum_i g_{1,i} = D$ and $\sum_i g_{0,i} = 1-D$.
	We write the circular convolution using polynomial product
	and reduction modulo $x^m-1$, which can be very efficiently
	computed with FFT.

	The chopping operation is a point-by-point
	multiplication with $g_0$ or $g_1$ in the normal (time) domain, while
	step convolution is a point-by-point multiplication with $\hat{f_s}$ in
	the transformed (complex, frequency) domain, and hence each additional
	bit $z_i$ requires one forward and one inverse transform as $\hat{f_s}$ 
	remains the same.
	Our open-source, FFTW3-based \cite{FrJo05} portable C implementation
	allows accurate computation of probabilities of almost
	arbitrarily long patterns
	\footnote{Reference source code:
	\url{https://github.com/mjosaarinen/bitpat}}.

\section{Entropy Evaluation}
\label{sec:entropy}

	Let $Z_n$ be a random variable representing $n$-bit sequences 
	$z=(z_1,z_2,..z_n)$ which are sequentially generated by the 
	stochastic process of Sect. \ref{sec:tmodel} characterized
	by stationary parameters $(F,D,\sigma^2)$. Each of $2^n$ possible
	outcomes ${z}$ can be assigned a probability $p_z = \Pr(Z_n = z)$.

	The NIST SP 800-90B \cite{TuBaKe:18} entropy source standard focuses on
	min-entropy $H_\infty$, a member of the R{\'e}nyi family of entropies
	\cite{Re61}. Min-entropy (or ``worst-case entropy'')
	has a simple definition in case of a discrete variable, 
	based on the likelihood of the most likely outcome of $Z_n$:
	\begin{equation}
		\label{eqn:minentropy}
		H_\infty(Z_n) = \min_{z} ( -\log_2 p_z ) = -\log_2 ( \max_{z} p_z )
	\end{equation}
	The  AIS 31 \cite{KiSc11} Common Criteria evaluation method
	additionally uses the traditional Shannon entropy metric
	\begin{equation}
		\label{eqn:shannon}
		H_1(Z_n) = -\sum_z p_z \log_2 p_z.
	\end{equation}

	For Shannon entropy we consider its
	\emph{entropy rate} $H(Z)$. This is  a $[0,1]$-valued limit 
	$H(Z) = \lim_{n \to \infty} \frac{1}{n}H_1(Z_n)$.

\subsection{Entropy Upper Bounds}
\label{sec:hupbound}

	Probabilities $p_z$ obtained via Algorithm \ref{alg:pzfft} and
	related techniques in Sect. \ref{sec:bitpat}
	can be substituted to Eqns. \ref{eqn:minentropy} and
	\ref{eqn:shannon} to evaluate $H_\infty(Z_n)$ and $H_1(Z_n)$,
	respectively.
	
	Shannon entropy $H_1(Z_n)$ provides increasingly accurate upper bounds
	since we have 
	$H_\infty(Z_n) \leq H_1(Z_n)$ and
	\begin{equation}
	\label{eqn:hchain}
		H(Z) \leq \ldots \leq  \frac{1}{3}H_1(Z_3) 
		\leq \frac{1}{2}H_1(Z_2) \leq H_1(Z_1).
	\end{equation}
	This relationship follows from subadditivity of joint entropy 
	in case of Shannon entropy $H_1$; the monotonic relationship of Eqn. 
	\ref{eqn:hchain} does not hold for min-entropy $H_\infty$.
	
	All R{\'e}nyi entropies are upper bounded by max-entropy (Hartley entropy)
	$H_0$, i.e. the number (cardinality) of $n$-bit $z$
	with nonzero probability; $H_0(Z_n) = \log_2 |p_z > 0|$. If an
	$m$-bit encoding exists for all elements with $p_z > 0$ of $Z_n$,
	then its cardinality is at most $2^m$ and $H(Z) \leq H_0(Z_n) \leq m$.

	This leads to limit $H(Z) \to 0$ for a noiseless ($\sigma^2 = 0$)
	source, regardless of $F$ oscillation. 
	A simple $\epsilon,\delta$ argument shows that each $n$-bit
	sequence $z_i$ with $\sigma^2=0$ can be encoded by expressing
	$F,D$, and $x_0$ with asymptotic $O(\log n)$ bits of precision.
	Claim follows from $\lim_{n \to \infty} \frac{1}{n}{\log n} \to 0$.

\subsection{Entropy Lower Bounds as a function of $\sigma^2$}
\label{sec:hlobound}

	For an entropy lower bound we consider the entropy contribution 
	of jitter to an individual bit $z_i$ when all of the parameters
	$(F,D,\sigma^2)$ and the previous phase $x_{i-1}$ are known (in addition 
	to previous bit $z_{i-1}$). Let 
	$p_e = \Pr(z_i = z'_i)$ where the expected bit value
	$z'_i$ is deterministic (from $x_{i-1} + F$).
	
	We observe that $F$ cancels out in this case and we have 
	$p_e=p_{00}+p_{11}$ with $F=0$ for equations of Section \ref{sec:compek}.
	In case of an unbiased source $D=\frac{1}{2}$, a further simplification 
	yields frequency-independent bounds as a function of $\sigma^2$: 
	\begin{align}
		\label{eqn:pebound}
		p_e 		&= 2 \cdot [ S_1(\nicefrac{1}{2}) - S_1(0) ] 
					 = 4 \cdot S_1(\nicefrac{1}{2})	\\
		H_1(Z) 		&\geq -p_e \log_2 p_e -(1-p_e) \log_2 (1-p_e) \\
		\label{eqn:pemin}
		H_\infty(Z) &\sim -\log_2 p_e.
	\end{align}
	where $S_1$ is Eqn. \ref{eqn:integ} with $F=0$, $D=\frac{1}{2}$.
	From Eqn. \ref{eqn:pebound} we can show a looser approximate bound 
	$p_e \leq 1-\frac{1}{2}\tanh(\pi \sigma)$.

	These estimates are lower than some previously proposed
	lower bounds (See Eqn. \ref{eqn:baudeth}) as they are based on fewer
	assumptions. Crucially they cover the entire range of $\sigma^2$ -- 
	and are therefore safer to use in cryptographic engineering.

\subsection{Min-Entropy Estimates}
\label{sec:minentropy}

	One part of min-entropy estimation of $H_\infty(Z_n)$ is to find a
	maximum-likelihood bit sequence $z$, and the second is to determine
	its probability $p_z$. The second part can be accomplished with Algorithm
	\ref{alg:pzfft} -- we have $H_\infty(Z_n) = -\log_2 p_z$.

	A reasonable $z$ string ``guess'' is to select $x_0$ at random and use
	the peak probability path $x_i = x_{i-1} + F ~(\bmod~1)$ to
	determine $z_1,z_2,\cdots,z_n$. This approach is asymptotically
	sound, but overestimates entropy for small $n$.
	 
	A practical depth-first approach is to proceed as in Alg. \ref{alg:pzfft}
	but evaluate weights $q_0 = \sum_{j=0}^{m-1} v_j g_{0,j}$ and 
	$q_1 = \sum_{j=0}^{m-1} v_j g_{1,j}$ at each step $i$, and select
	$z_i$ with the higher probability mass.
	
	While $\max p_z$ can usually be found with subexponential $z$ guesses,
	worst-case complexity of this problem remains open. Certainly, a simple 
	depth-first search will not always work. Consider $\max p_z$ for source 
	$(D=\frac{1}{2}, F=0.15, \sigma^2=0.04)$:

	\vspace{1ex}
	\begin{tabular}{l l}
	$\frac{1}{3}H_\infty(Z_3) = 0.844807$, 
		& $p_{000} = p_{111} = 0.172609$.	\\
	$\frac{1}{4}H_\infty(Z_4) = 0.849297$, 
		& $p_{0000} = p_{1111} = 0.0949171$.\\
	$\frac{1}{5}H_\infty(Z_5) = 0.846341$, 
		& $p_{00011} = p_{00111} = $ \\
		& $p_{11000} = p_{11100}=0.0532267$.
	\end{tabular}
	\vspace{1ex}

	We first note that the entropy increase from $Z_3$ violates subadditivity 
	(and would not be possible for $H_1$; Eqn. \ref{eqn:hchain}).
	Furthermore, the maximum-probability bit strings of $Z_4$ are not
	substrings of those for $Z_5$; not reachable via iteration.

\subsection{Comparison to SP 800-90B $H_\text{original}$ Estimation}

	Current NIST SP 800-90B methodology \cite[Section 3.1.3]{TuBaKe:18}
	suggests the use of stochastic models for the $H_\text{submitter}$
	analytic estimate. 
	A measurement-based black-box estimate $H_\text{original}$ also needs
	to be reported; for non-IID sources it is the
	minimum output of ten entropy estimation algorithms,
	for which an official software package is available from
	NIST\footnote{NIST:
	\url{https://github.com/usnistgov/SP800-90B_EntropyAssessment}}.
	
	We generated 16,000 simulated sequences of $8*10^6$ bits with random
	$\sigma$ and $F \in [0,\nicefrac{1}{2}]$, and subjected them to
	the official $H_\text{original}$ tests.
	Fig. \ref{fig:sp90b} contrasts these simulation results with an analytic 
	min-entropy estimate for $Z_{100}$ where $z$ is chosen to follow
	maximum probability mass (Section \ref{sec:minentropy}).
	
	As expected, the black-box heuristic which has been designed to 
	{\it ``lean toward a conservative underestimate of min-entropy''}
	\cite[Sect G.2]{TuBaKe:18} reports less entropy than our estimates.
	
	Fig. \ref{fig:sp90b} also shows $H_\infty(Z) \sim -\log_2 p_e$
	min-entropy derived from the  bit-prediction bound of Eqns.
	\ref{eqn:pebound} and \ref{eqn:pemin}. This curve mostly traces the
	lower reaches of the stochastic model estimates (which are scattered
	here due to randomness of $F$). We suggest that this is a safe
	min-entropy engineering estimate from variance 
	$\sigma^2$, assuming an unbiased source ($D=\nicefrac{1}{2}$).

	\begin{figure}[tb]
	\centering
	\includegraphics[width=.45\textwidth]{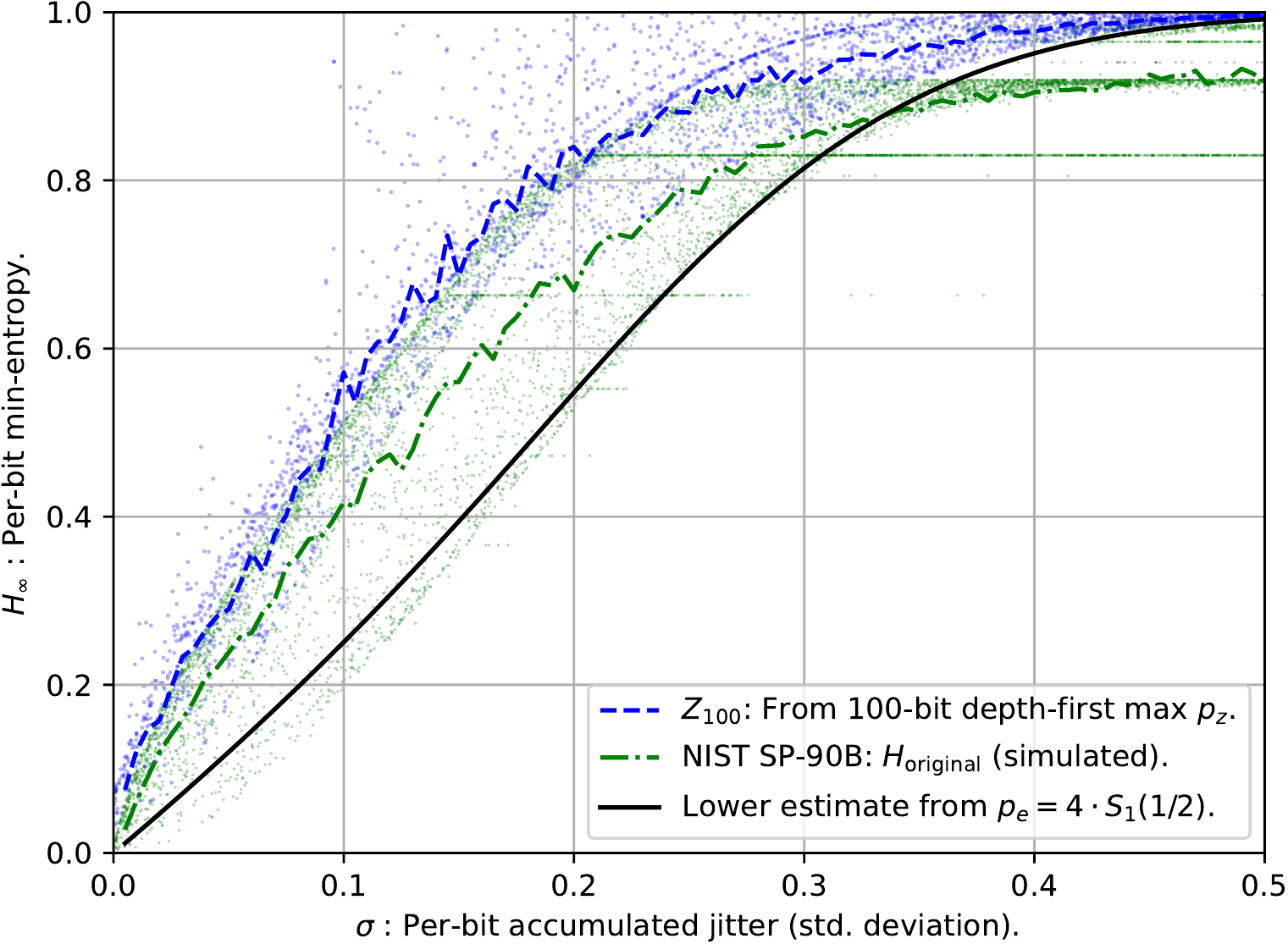}
	\caption{Min-entropy from distribution of $Z_{100}$ with
		depth-first $z$ selection,
		NIST SP 800-90B estimates, and our $p_e$ bit-prediction
		lower bound. Experimental data is represented as a scatter plot,
		with a line at the average.}
	\label{fig:sp90b}
	\end{figure}

\section*{Acknowledgments}

	\noindent
	Author thanks Joshua E. Hill for helpful comments. This work has been
	supported by Innovate UK Research Grant 105747.

\ifieee
	\newpage
	\bibliographystyle{IEEEtran}
\else
	\bibliographystyle{unsrturl}
\fi

\bibliography{entropy}

\end{document}
